\date{}
\theoremstyle{break}
\newtheorem{thm}{Theorem}
\newtheorem{cor}{Corollary}
\newtheorem{lem}{Lemma}
\newtheorem{prop}{Proposition}
\newtheorem{defn}{Definition}
\newtheorem{exmp}{Example}
\newcommand{\field}[1]{\mathbb{#1}}
\newcommand{\Z }{\field{Z}}
\newcommand{\cqf}
{%
\mbox{}%
\nolinebreak%
\hfill%
\rule{2mm}{2mm}%
\medbreak%
\par\noindent%
}
\renewenvironment{proof}{\noindent{\bf Proof}}{\hfill\cqf}
\title{\bf Duality for Modules and Applications to\\ Decoding Linear Codes over Finite Commutative Rings}
\author{Asmae Drhima$^{1}$ and Mustapha Najmeddine$^{2}$}
\begin{document}
\maketitle


\begin{center}
\small{$^{1}$ Department of Mathematics, Faculty of Science, My Isma\"{\i}l University, Meknes, Morocco.
\\E-mail: drhima.asmae@gmail.com
\\$^{2}$ Department of Mathematics, ENSAM, My Isma\"{\i}l University, Meknes, Morocco.
\\E-mail: najmeddine.mustapha@gmail.com}

\bigskip

\end{center}

\begin{abstract}
Using linear functional-based duality of modules, we generalize the syndrome decoding algorithm of linear codes over finite fields to those over finite commutative rings. Moreover, If the ring is local the algorithm is simplified by introducing the control matrix.
\\\textbf{Keywords.} Control matrix, Dual code, Finite ring, Linear code, Syndrome decoding.
\end{abstract}

\bigskip

\bigskip
\section{Introduction}
\hspace{0.5cm}Syndrome decoding is a more efficient method of decoding linear codes over finite fields over a noisy channel [5]. Thus, in this paper we investigate the generalization of the syndrome decoding to linear codes over finite commutative rings. A first generalization was given in [1] via Pontryagin duality. In the same direction we give another generalization using linear functional-based duality. In general, linear functional-based duality and character-based (or Pontryagin) duality are not equivalent (for more details see [8]).

Syndrome decoding of linear codes over finite fields is based on the two following famous results in linear algebra [9] :
\begin{equation}
 C^{\perp\perp} = C
\end{equation}

\begin{equation}
dim(C^{\perp}) = n-dim(C).
\end{equation}
where $C$ is a subspace of $K^{n}$ and $K$ is a field. These properties are not always valid in $A^{n}$ with $A$ is a ring. Wood in [10] has shown the property (1) for any submodule of $A^{n}$ with $A$ is a finite quasi-Frobenius ring. Afterwards, Mittelholzer in [8] extends the class of rings for which the property (1) holds for projective submodules of $A^{n}$ to
artinian rings. In this work we present a detailed proof of (1) for free submodules of $A^{n}$ with $A$ is a
finite ring. In first, we prove in proposition \ref{p3} the property (1) on local finite ring using the existence of free direct summand of free submodule of $A^{n}$. The decomposition of any finite commutative ring as a direct sum of local rings allows us to generalize this property to any finite ring in theorem \ref{t1}.

This article is organized as follows. Section 2 begins by recalls the notions of dual module, orthogonal and bi-orthogonal of submodules in the framework of linear functional-based duality. the following of this section is devoted on the proof of (1). The coding theory begins in section 3 with a review of essential definitions of linear codes over rings. After introducing the concept of dual code, we prove that every dual code of a free code over local ring is also free and its rank satisfies the property (2). Based on results of previous sections especially on theorem \ref{p32}, we present in section 4 the syndrome decoding algorithm. Computing the syndrome is simplified by introducing a control matrix for linear code over local ring.
\bigskip
\section{Duality - Orthogonality}
\bigskip
Throughout this paper, $A$ denotes a finite commutative ring with identity and $M$ an $A$-module.
\begin{defn}
The $A$-module $Hom_{A}(M,A)$ of linear functionals of $M$ is called the dual module of $M$ and denoted $M^{*}.$
\end{defn}

\begin{prop}[\cite{2}, Proposition 6.1.5]
If $M$ is a free module of finite rank $n$, i.e., $M \cong A^{n}$ as $A$-modules. Then $M^{*}$ is free of finite rank $n$ too.
\end{prop}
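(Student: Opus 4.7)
The plan is to reduce the computation of $M^{*}$ to the case $M=A^{n}$ via the given isomorphism, and then exhibit an explicit basis of $\Hom_{A}(A^{n},A)$. Concretely, fix an isomorphism $\varphi\colon M\xrightarrow{\sim} A^{n}$, so that the functor $\Hom_{A}(-,A)$ produces an isomorphism $M^{*}\cong\Hom_{A}(A^{n},A)$; thus it suffices to show that $\Hom_{A}(A^{n},A)$ is free of rank $n$.

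For this I would use the canonical standard basis $e_{1},\dots,e_{n}$ of $A^{n}$ (with $e_{i}$ the tuple whose $i$th entry is $1$ and whose other entries are $0$) and define the coordinate functionals $e_{i}^{*}\in\Hom_{A}(A^{n},A)$ by $e_{i}^{*}(e_{j})=\delta_{ij}$, extended $A$-linearly. Then for any $f\in\Hom_{A}(A^{n},A)$ one checks that $f=\sum_{i=1}^{n}f(e_{i})\,e_{i}^{*}$ by evaluating both sides on each $e_{j}$, which gives the spanning property. For linear independence, suppose $\sum_{i=1}^{n}a_{i}e_{i}^{*}=0$; evaluating at $e_{j}$ yields $a_{j}=0$ for every $j$. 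Hence $(e_{1}^{*},\dots,e_{n}^{*})$ is a basis of $\Hom_{A}(A^{n},A)$, showing it is free of rank $n$.

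Equivalently, and perhaps more cleanly for the write-up, I could argue via the natural isomorphisms
\begin{equation*}
\Hom_{A}(A^{n},A)\;\cong\;\prod_{i=1}^{n}\Hom_{A}(A,A)\;\cong\;\prod_{i=1}^{n}A\;=\;A^{n},
\end{equation*}
using that $\Hom_{A}(-,A)$ converts finite direct sums in the first variable into direct products, and that $\Hom_{A}(A,A)\cong A$ via evaluation at $1$. Transporting this isomorphism back through $\varphi$ gives $M^{*}\cong A^{n}$.

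No real obstacle is expected; the only thing to be careful about is that none of the steps above rely on $A$ being a field, so the proof is valid verbatim over an arbitrary commutative ring with identity, which is precisely the generality needed here.
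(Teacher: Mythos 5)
Your proof is correct; the paper itself gives no proof of this proposition, citing it directly from Adkins--Weintraub (reference [2], Proposition 6.1.5), and your dual-basis argument (equivalently, the additivity of $\Hom_{A}(-,A)$ on finite direct sums together with $\Hom_{A}(A,A)\cong A$) is exactly the standard argument that the cited source uses. Your closing remark is also apt: the construction of the dual basis $(e_{i}^{*})$ is precisely what the paper relies on later, in the proof of Proposition \ref{p3}(1).
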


\begin{defn}
Let $N$ be a submodule of $M$.
\begin{enumerate}
\item The orthogonal of $N$ is the submodule of $M^{*}$ :
$$N^{\circ} = \{f \in M^{*} : f(x) = 0, \forall x \in N\}$$
\item The bi-orthogonal of $N$ is the submodule of $M$ :
$$N^{\circ\circ} = \{x \in M : f(x) = 0, \forall f \in N^{\circ}\}$$
\end{enumerate}
\end{defn}

\begin{prop}[\cite{9}, Proposition 8.7.8]
If $N$ is a submodule of $M$, then the $A$-modules $\left(M/\raisebox{-4pt}{$N$}\right)^{*}$ and $N^{\circ}$ are isomorphic.
\end{prop}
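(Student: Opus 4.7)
The plan is to exhibit an explicit $A$-linear isomorphism induced by the canonical projection $\pi : M \to M/N$, following the standard pattern of the universal property of the quotient module.

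First, I would define the pullback map
\[
\pi^{*} : (M/N)^{*} \longrightarrow M^{*}, \qquad \pi^{*}(g) = g \circ \pi.
\]
A short routine check shows $\pi^{*}$ is $A$-linear, since composition with a fixed map preserves addition and scalar multiplication of linear functionals.

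Next, I would verify that $\pi^{*}$ takes values in $N^{\circ}$. For any $g \in (M/N)^{*}$ and any $x \in N$, we have $(g \circ \pi)(x) = g(x + N) = g(0_{M/N}) = 0$, hence $\pi^{*}(g) \in N^{\circ}$. So $\pi^{*}$ corestricts to a map $(M/N)^{*} \to N^{\circ}$.

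Then I would show this corestriction is an isomorphism by constructing an explicit inverse. Given $f \in N^{\circ}$, since $f$ vanishes on $N$, the universal property of the quotient produces a unique $A$-linear map $\overline{f} : M/N \to A$ such that $\overline{f}\circ \pi = f$, namely $\overline{f}(x + N) = f(x)$ (well-defined because $f(N) = 0$). The assignment $f \mapsto \overline{f}$ is clearly $A$-linear, and the two assignments $g \mapsto g \circ \pi$ and $f \mapsto \overline{f}$ are mutually inverse: $\overline{\pi^{*}(g)} = g$ because $\pi$ is surjective, and $\pi^{*}(\overline{f}) = f$ by construction.

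The argument is essentially formal and relies only on the surjectivity of $\pi$ and the universal property of the quotient; there is no genuine obstacle since no finiteness or freeness hypothesis on $M$ or $N$ is needed. The only point requiring a touch of care is well-definedness of $\overline{f}$, which is exactly where the hypothesis $f \in N^{\circ}$ (i.e.\ $f_{|N} = 0$) intervenes.
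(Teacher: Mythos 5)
Your proof is correct and complete: the pullback $\pi^{*}$ along the canonical projection, together with the factorization of any $f\in N^{\circ}$ through $M/N$, is exactly the standard argument, and you correctly identify that the only delicate point is the well-definedness of $\overline{f}$, which is where $f_{|N}=0$ is used. The paper itself gives no proof of this proposition (it simply cites Tauvel, Proposition 8.7.8), so your write-up supplies precisely the expected argument with no gaps.
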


The aim of the following is  to show that every free submodule $N$ of $A^{n}$ satisfies $N^{\circ\circ} = N$. We begin by establishing this result on
a local ring using the following lemma. This last, appears in Appendix II of \cite{6}, is valid on artinian rings, in particular on finite rings.

\begin{lem}\label{l21}
If $A$ is local and $F$ is a free submodule of $A^{n}$. Then there exists a free submodule $Q$ of $A^{n}$ such that $F \oplus Q = A^{n}$
and $A^{n}/\raisebox{-4pt}{$F$}$ is free.
\end{lem}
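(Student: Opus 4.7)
The plan is to push $F$ down to the residue field via Nakayama: first, show that any basis of $F$ reduces to a $k$-linearly independent set in $A^n/\mathfrak{m}A^n\cong k^n$, where $k=A/\mathfrak{m}$; then extend this reduced basis to one of $k^n$ and lift it back to an $A$-basis of $A^n$. The elements completing the basis will generate the desired free summand $Q$.

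The key step is the linear independence claim. Fix a basis $(e_1,\ldots,e_r)$ of $F$. Because $A$ is a finite commutative local ring it is Artinian, so its maximal ideal $\mathfrak{m}$ is nilpotent; let $N\geq 1$ be minimal with $\mathfrak{m}^N=0$, so $\mathfrak{m}^{N-1}\neq 0$. Suppose the images $\bar{e}_1,\ldots,\bar{e}_r$ were dependent in $k^n$. Then there exist $a_1,\ldots,a_r\in A$ not all in $\mathfrak{m}$ with $\sum_i a_i e_i\in\mathfrak{m}A^n$; reindexing, assume $a_1$ is a unit. Setting $e_1':=\sum_i a_i e_i$, the change-of-basis matrix from $(e_1,\ldots,e_r)$ to $(e_1',e_2,\ldots,e_r)$ is upper triangular with unit diagonal, so $(e_1',e_2,\ldots,e_r)$ is still a basis of the free module $F$, and in particular $\mathrm{Ann}(e_1')=0$. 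But $e_1'\in\mathfrak{m}A^n$ gives $\mathfrak{m}^{N-1}e_1'\subseteq\mathfrak{m}^N A^n=0$, forcing $\mathfrak{m}^{N-1}=0$, a contradiction.

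Having established linear independence, extend $\bar{e}_1,\ldots,\bar{e}_r$ to a $k$-basis $\bar{e}_1,\ldots,\bar{e}_r,\bar{f}_{r+1},\ldots,\bar{f}_n$ of $k^n$ and lift each $\bar{f}_j$ arbitrarily to some $f_j\in A^n$. The $n\times n$ matrix $P$ whose rows are $e_1,\ldots,e_r,f_{r+1},\ldots,f_n$ reduces to an invertible matrix over $k$, so $\det P\notin\mathfrak{m}$; hence $\det P$ is a unit and $P$ is invertible over $A$. Thus $(e_1,\ldots,e_r,f_{r+1},\ldots,f_n)$ is an $A$-basis of $A^n$, and setting $Q:=Af_{r+1}\oplus\cdots\oplus Af_n$ gives $A^n=F\oplus Q$ with $Q$ free of rank $n-r$; consequently $A^n/F\cong Q$ is free.

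The main obstacle is the linear independence step: a bare Nakayama argument does not immediately suffice, because $F$ could conceivably sit inside $\mathfrak{m}A^n$ over a general local ring. Nilpotence of $\mathfrak{m}$, coming from finiteness (equivalently, Artinianness) of $A$, is precisely what converts such a situation into a contradiction with freeness of $F$, and hence drives the whole argument.
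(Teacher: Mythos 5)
Your proof is correct. Note that the paper itself gives no argument for this lemma: it is quoted from Appendix II of reference [6] (Loeliger--Mittelholzer), so there is no in-paper proof to compare against. Your argument is the standard self-contained one and it fills that gap cleanly: reduce a basis $(e_1,\dots,e_r)$ of $F$ modulo $\mathfrak{m}$, use nilpotence of $\mathfrak{m}$ (available because a finite, hence Artinian, local ring has nilpotent maximal ideal) to rule out a dependence relation --- your observation that a unit coefficient lets you replace $e_1$ by an element of $\mathfrak{m}A^n$ that is still part of a basis, whose annihilator must then contain $\mathfrak{m}^{N-1}$, is exactly the right use of freeness --- then extend to a $k$-basis of $k^n$, lift, and invoke that a matrix whose determinant lies outside $\mathfrak{m}$ is invertible over the local ring $A$. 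You also correctly identify why the statement genuinely needs Artinianness (over a DVR, $\pi A\subseteq A$ is free but not a summand). The only cosmetic quibble is calling the change-of-basis matrix ``upper triangular with unit diagonal'': its diagonal is $(a_1,1,\dots,1)$, so the relevant point is just that $\det = a_1$ is a unit; this does not affect the argument.
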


\begin{prop}\label{p3}
Suppose that $A$ is local. Let $N$ be a free submodule of $A^{n}$ and $x \in A^{n}$.
\begin{enumerate}
\item $x = 0$ if and only if for all $f \in (A^{n})^{*},\ f(x) = 0$.
\item $x \in N$ if and only if for all $f \in N^{\circ},\ f(x) = 0$.
\end{enumerate}
\end{prop}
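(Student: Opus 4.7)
For part (1), the forward implication is immediate. For the reverse, the plan is to test $x$ against the coordinate projections $\pi_i : A^n \to A$, $(a_1,\dots,a_n) \mapsto a_i$, which manifestly lie in $(A^n)^*$. The hypothesis forces $\pi_i(x)=0$ for every $i$, hence $x=0$.

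For part (2), the forward implication is the very definition of $N^\circ$. The interesting direction assumes $f(x)=0$ for all $f \in N^\circ$ and aims at $x \in N$. The plan is to invoke Lemma \ref{l21}: since $A$ is local and $N$ is a free submodule of $A^n$, there is a free submodule $Q \subseteq A^n$ with $A^n = N \oplus Q$. Decompose
\begin{equation*}
x = y + z, \qquad y \in N,\ z \in Q,
\end{equation*}
and show $z=0$; this will give $x=y\in N$. The key observation is that every linear functional $g \in Q^*$ extends to an element $\tilde g \in (A^n)^*$ by the rule $\tilde g(n+q) = g(q)$ for $n \in N$, $q \in Q$, and this $\tilde g$ belongs to $N^\circ$ by construction. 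Applying the hypothesis to $\tilde g$ yields
\begin{equation*}
0 = \tilde g(x) = \tilde g(y) + \tilde g(z) = 0 + g(z) = g(z),
\end{equation*}
so $g(z)=0$ for every $g \in Q^*$.

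Since $Q$ is free of finite rank, say $Q \cong A^m$, the argument of part (1) applies verbatim to $Q$ via its coordinate projections, and we conclude $z=0$. Thus $x=y\in N$.

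The only delicate point is the extension step: without the direct-sum decomposition one cannot freely extend functionals from a submodule to the whole module (this fails over general rings, which is exactly why Lemma \ref{l21} and the local hypothesis are used). Once the splitting is in hand, the rest of the argument is purely formal, reducing part (2) to the coordinate-projection argument of part (1).
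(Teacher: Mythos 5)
Your proof is correct. It rests on the same key lemma as the paper's (Lemma \ref{l21}) and, like the paper, reduces part (2) to part (1) applied to a free module of finite rank; the only real difference is which free module carries the argument. The paper pulls each $\varphi \in \left(A^{n}/N\right)^{*}$ back to $f = \varphi \circ \pi \in N^{\circ}$ along the quotient map $\pi : A^{n} \to A^{n}/N$, deduces that $\bar{x}$ is annihilated by every functional on $A^{n}/N$, and concludes $\bar{x}=\bar{0}$ because that quotient is free by the lemma. You instead work with the free complement $Q$ furnished by the same lemma, extend each $g \in Q^{*}$ by zero on $N$ to get an element of $N^{\circ}$, and conclude that the $Q$-component of $x$ vanishes. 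Under the isomorphism $Q \cong A^{n}/N$ these are literally the same functionals and the same reduction; your version makes the role of the splitting (and hence of the local hypothesis) more explicit, while the paper's quotient formulation avoids decomposing $x$ and is marginally shorter. Both versions tacitly use that part (1), though stated for $A^{n}$, holds for any finite-rank free module via a dual basis, and you are right to flag the extension step as the point where freeness of the complement is indispensable.
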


\begin{proof}
\begin{enumerate}
\item The necessary condition is trivial. Conversely, suppose that $\forall f\in (A^{n})^{*}$ $f(x) = 0$. Let $(e_{i})_{1 \leq i \leq n}$ be a
basis of $A^{n}$ and $(e_{i}^{*})_{1 \leq i \leq n}$ its dual basis in $(A^{n})^{*}$. If $x = \sum\limits_{i=1}^{n} x_{i}e_{i}$ with $(x_{i})_{1 \leq i \leq
n} \in A^{n}$, then $e_{i}^{*}(x) = x_{i} = 0$ $\forall 1 \leq i \leq n$. Therefore $x = 0$.
\item The necessary condition is a consequence of the orthogonal of $N$. Conversely, suppose that $\forall f \in N^{\circ}$ $f(x) = 0$. Let $\varphi \in \left(A^{n}/\raisebox{-4pt}{$N$}\right)^{*}$, the map $f: A^{n} \rightarrow A$, $y \mapsto f(y) = \varphi(\bar{y})$ is linear and $f \in N^{\circ}$. Then $f(x) = 0$. Therefore
$\varphi(\bar{x}) = 0$ for all $\varphi \in \left(A^{n}/\raisebox{-4pt}{$N$}\right)^{*}$ and $A^{n}/\raisebox{-4pt}{$N$}$ is free by Lemma 1. Thus $\bar{x} = \bar{0}$ and $x \in N$. Consequently,
$N^{\circ\circ} = N$.\vspace{-.6cm}
\end{enumerate}
\end{proof}

The ring $A$ is finite commutative. According to the structure theorem for such rings \cite{7}, $A$ can be written as a finite direct sum of local rings
$A_{i}$ ,i.e.,
\begin{equation}
A = \bigoplus\limits_{i=1}^{l} A_{i}.
\end{equation}
where $A_{i} \cong Ae_{i}$ $\forall i = 1,...,l$ and $(e_{i})_{1 \leq i \leq l}$ is a complete system of orthogonal idempotents of $A$, i.e.,
$$e_{i}^{2} = e_{i} \mbox{ , } e_{i}e_{j} = 0 \mbox{ for }i \neq j\mbox{ and }\sum\limits_{i=1}^{l} e_{i} = 1.$$
\\For all $i = 1,...,l,\ M_{_i}=e_{i}M$ is a submodule of $M$ that can provide a structure of $A_{i}$-module \cite{3}. Moreover,
\begin{equation}
M = \bigoplus\limits_{i=1}^{l} M_{i}.
\end{equation}


\begin{lem}\label{l2}
Suppose that $M$ is a finitely generated module over $A$. If $M$ is $A$-free then $M_{i}$ is $A_{i}$-free for all $i = 1,...,l$.
\end{lem}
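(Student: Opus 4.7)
The plan is to produce an explicit $A_i$-basis of $M_i$ starting from an $A$-basis of $M$, using the identification $A_i = e_i A$ and $M_i = e_i M$ together with the relation $e_i^2 = e_i$.

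First I would fix an $A$-basis $(m_j)_{1 \le j \le n}$ of $M$, which exists since $M$ is a finitely generated free $A$-module, and consider the candidate family $(e_i m_j)_{1 \le j \le n}$ inside $M_i = e_i M$. Recall that the $A_i$-module structure on $M_i$ is given by $(e_i a)\cdot (e_i m) = e_i a m$, using $e_i^2 = e_i$ and the commutativity of $A$.

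Next I would check spanning. Any element of $M_i$ has the form $e_i x$ for some $x = \sum_{j=1}^n a_j m_j$ in $M$, so
$$e_i x = \sum_{j=1}^n e_i a_j m_j = \sum_{j=1}^n (e_i a_j)\cdot (e_i m_j),$$
where each coefficient $e_i a_j$ belongs to $e_i A = A_i$. Thus $(e_i m_j)_j$ generates $M_i$ over $A_i$.

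Then I would prove linear independence over $A_i$. Suppose $\sum_{j=1}^n b_j \cdot (e_i m_j) = 0$ with $b_j \in A_i$. Since $b_j \in e_i A$ gives $b_j = e_i b_j$, the sum rewrites as $\sum_{j=1}^n b_j m_j = 0$ inside $M$, and the freeness of $(m_j)$ over $A$ forces $b_j = 0$ for every $j$. This yields the desired $A_i$-basis, so $M_i$ is $A_i$-free of the same rank.

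No step looks like a serious obstacle; the only point requiring care is keeping the actions straight, namely verifying that the $A_i$-action on $M_i$ obtained by restriction coincides with what one gets from the ring isomorphism $A_i \cong e_i A$, which is a direct consequence of $e_i$ being the unit of $A_i$ and of the orthogonality relations $e_i^2 = e_i$, $e_i e_k = 0$ for $k \ne i$.
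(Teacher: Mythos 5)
Your proposal is correct and follows essentially the same route as the paper's proof: both take an $A$-basis $(m_j)$ of $M$ and verify directly that $(e_i m_j)$ spans $M_i$ over $A_i$ and is $A_i$-linearly independent, using $e_i^2 = e_i$ and the identification $A_i = e_i A$. No discrepancies to report.
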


\begin{proof}

\noindent
Suppose that $M$ is free over $A$. Let $(s_{1},...,s_{n})$ be a basis of $M$. We show that $(e_{i}s_{1},...,e_{i}s_{n})$ is a basis of $M_{i}$.
\\$\bullet$ Let $e_{i}x \in M_{i}$ such that $x = \sum\limits_{j=1}^{n} x_{j}s_{j} \in M$ and $(x_{1},...,x_{n}) \in A^{n}$. $e_{i}x = \sum\limits_{j=1}^{n} (x_{j}e_{i})s_{j} = \sum\limits_{j=1}^{n} (x_{j}e_{i})(e_{i}s_{j}) = \sum\limits_{j=1}^{n} \alpha_{j}(e_{i}s_{j})$ with $\alpha_{j} \in A_{i}$. Therefore $(e_{i}s_{j})_{1 \leq j \leq n}$ generates $M_{i}$ .
\\$\bullet$ Let $(\alpha_{1},...,\alpha_{n}) \in A_{i}^{n}$ such that $\sum\limits_{j=1}^{n} \alpha_{j}(e_{i}s_{j}) = 0 $. Then
$\forall j = 1,...,n$, $\alpha_{j} = a_{j}e_{i} \in Ae_{i}$ and $\sum\limits_{j=1}^{n} (a_{j}e_{i})s_{j} = 0$. Since $(s_{i})_{1 \leq i \leq n}$ is free over
$A$, then $a_{j}e_{i} = \alpha_{j} = 0$ $\forall j$. So $(e_{i}s_{j})_{1 \leq j \leq n}$ is linearly independent.
\end{proof}

Let $f \in Hom_{A}(M,A)$. For all $i = 1,...,l,$ the map $f_{i}: e_{i}x \mapsto e_{i}f(x)$ of $M_{i}$ to $A_{i}$ is $A_{i}$-linear. Furthermore, the map
\begin{equation}\label{bij}
\begin{array}{rcl}Hom_{A}(M,A)& \longrightarrow& \bigoplus\limits_{i=1}^{l} Hom_{A_{i}}(M_{i},A_{i})\\
f &\longmapsto &(f_{1},...,f_{l})
\end{array}
\end{equation}
is bijective. Indeed, if $(f_{1},...,f_{l}) \in \bigoplus\limits_{i=1}^{l} Hom_{A_{i}}(M_{i},A_{i})$. Then $g = \sum \limits_{i=1}^{l} f_{i} \circ pr_{i}$, where
$pr_{i}: M \rightarrow M_{i},$ $x \mapsto e_{i}x$ is the canonical projection, is a linear functional of $M$ because if $a \in A$ and $x \in M$ then
$$\begin{array}{ccl}
g(ax) &=& \sum \limits_{i=1}^{l} f_{_i}\left(e_{_i}ax\right)\\
      &=& \sum \limits_{i=1}^{l} ae_{i}f_{_i}\left(e_{i}x\right)\\
      &=& a \sum \limits_{i=1}^{l} f_{_i}\left(e_{_i}x\right)\\
      &=& ag(x).
\end{array}$$
And $g$  is the unique element of $Hom_{A}(M,A)$ satisfying for all $x \in M$ $f_{i}(e_{i}x) = e_{i} g(x)$.
\begin{thm}\label{t1}
Suppose that $M \cong A^{n}$. If $N$ is a free submodule of $M$, then $N^{\circ\circ} = N$.
\end{thm}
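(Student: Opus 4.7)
The plan is to reduce the global statement to the local case handled in Proposition \ref{p3} by exploiting the idempotent decomposition $A = \bigoplus_{i=1}^{l} A_i$ together with the bijection \eqref{bij} between $\Hom_A(M,A)$ and $\bigoplus_i \Hom_{A_i}(M_i,A_i)$. The inclusion $N \subseteq N^{\circ\circ}$ is immediate from the definitions, so the work lies in proving the reverse inclusion.

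First I would decompose $N$ along the idempotents. Setting $N_i = e_i N$, one has $N = \bigoplus_{i=1}^l N_i$ as $A$-modules (and as $A_i$-modules in each component), $N_i$ is a submodule of $M_i \cong A_i^n$, and since $N$ is $A$-free, Lemma \ref{l2} ensures that each $N_i$ is $A_i$-free. Because $A_i$ is local, Proposition \ref{p3} applies and gives $N_i^{\circ\circ} = N_i$ in $M_i$ for every $i$, where the orthogonal is taken with respect to $\Hom_{A_i}(M_i,A_i)$.

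Next I would transport the orthogonality relation through the bijection \eqref{bij}. The key observation is that, under $f \leftrightarrow (f_1,\dots,f_l)$, one has $f \in N^\circ$ if and only if $f_i \in N_i^\circ$ for every $i$. This uses the identity $f_i(e_i y) = e_i f(y)$ from the construction preceding \eqref{bij}: if $f$ kills $N$, applying $e_i$ yields $f_i(e_i y) = 0$ for every $y \in N$, i.e.\ $f_i \in N_i^\circ$; conversely, if each $f_i$ kills $N_i$, then $f(y) = \sum_i f_i(e_i y) = 0$ on $N$. In particular, every $f_i \in N_i^\circ$ extends to an element of $N^\circ$ by taking the other components to be zero.

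Finally, given $x \in N^{\circ\circ}$, write $x = \sum_i e_i x$ with $e_i x \in M_i$. For any $f_i \in N_i^\circ$, the extension $f = f_i \circ pr_i$ lies in $N^\circ$, so $0 = f(x) = f_i(e_i x)$; hence $e_i x \in N_i^{\circ\circ} = N_i$ by Proposition \ref{p3}, and therefore $x = \sum_i e_i x \in \bigoplus_i N_i = N$. The main subtlety is the compatibility check in the previous paragraph — making sure that the linear functional-based orthogonal relation respects the direct sum decomposition in both directions — but once \eqref{bij} is combined with the identity $f_i(e_i y) = e_i f(y)$, the argument reduces cleanly to the local case.
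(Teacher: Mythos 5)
Your proposal is correct and follows essentially the same route as the paper: decompose $N=\bigoplus_i e_iN$ along the idempotents, use Lemma \ref{l2} to see each $N_i$ is $A_i$-free, transport orthogonality through the bijection \eqref{bij} so that $N^{\circ}$ (hence $N^{\circ\circ}$) decomposes componentwise, and conclude via Proposition \ref{p3} on each local factor. Your write-up is in fact slightly more careful than the paper's at the step passing from $N^{\circ}=\bigoplus_i N_i^{\circ}$ to $N^{\circ\circ}=\bigoplus_i N_i^{\circ\circ}$, which you justify explicitly via the extensions $f_i\circ pr_i$.
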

\begin{proof}

\noindent
$N$ is on the form $\bigoplus\limits_{i=1}^{l} N_{i}$ with $N_{i} = e_{i}N$. For all $i = 1,...,l$, $N_{i}$ is a submodule of the
$A_{i}$-module $M_{i}$. Let $f \in Hom_{A}(M,A)$. Using (5), we have $f\mid_{N} = 0$ iff $f_{i}\mid_{N_{i}} = 0$ $\forall i = 1,...,l$. Then
$N^{\circ} = \bigoplus\limits_{i=1}^{l} N_{i}^{\circ}$. Thus $N^{\circ\circ} = \bigoplus\limits_{i=1}^{l} N_{i}^{\circ\circ}$. By Lemma \ref{l2}, $N_{i}$
is free over the local ring $A_{i}$. Then by Proposition \ref{p3}, $N_{i}^{\circ\circ} = N$. Therefore $N^{\circ\circ} = \bigoplus\limits_{i=1}^{l} N_{i} =
N$.
\end{proof}

\section{Linear Codes}
\bigskip
\begin{defn}
\begin{enumerate}\item A linear code $C$ over $A$ of length $n$ is a submodule of $A^{n}$. If $C$ is free over $A$ of rank $k$, $C$ is said an $(n,k)$-code
over $A$. The elements of $C$ are called codewords.
\item Let $C$ be an $(n,k)$-code over $A$. The matrix $G \in M_{k,n}(A)$ whose rows form a basis of $C$ is said to be a generator matrix of $C$.
\item The Hamming distance between $x = (x_{1},...,x_{n})$ and $y = (y_{1},...,y_{n})$ in $A^{n}$ is
$$d(x,y)=|\{ i \in \{ 1,...,n\} : x_{i} \neq y_{i}\} |.$$
\item  The Hamming weight of $x=\left(x_{_1},\ldots,x_{_n}\right) \in A^{n}$ is
$$w(x) = d(x,0)=|\left\{1\le i\le n:x_{_i}\neq0\right\}|.$$
\item The minimal distance of a linear code $C$ is :
$$d(C) = min \{ d(x,y) : x \neq y \in C\} = min \{ w(x) : x \in C - \{ 0 \} \}$$.
\end{enumerate}
\end{defn}

The space $A^{^n}$ with the Hamming distance is a metric space.

\begin{defn}[Dual Code]
Let $C$ be a linear code over $A$ of length $n$. We define the dual code of $C$ by :
$$C^{\perp} = \{ y \in A^{n} : <x,y> = 0, \forall x \in C\}$$
where $<,>$ is the symmetric bilinear form defined for all $x = (x_{1},...,x_{n})$ and
all $y = (y_{1},...,y_{n})$ in $A^{n}$ by :
$$<x,y>=\sum\limits_{i=1}^{n} x_{i}y_{i}.$$
\end{defn}

\begin{prop}\label{p31}
Let $C$ be a linear code over $A$ of length $n$. Then $C^{\perp}$ and $C^{\circ}$ are isomorphic.
\end{prop}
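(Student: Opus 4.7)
The plan is to exhibit an explicit $A$-linear isomorphism $A^{n}\stackrel{\sim}{\to}(A^{n})^{*}$ induced by the bilinear form $\langle\cdot,\cdot\rangle$, and then show that this isomorphism carries $C^{\perp}$ onto $C^{\circ}$ bijectively.

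Concretely, I would define
\[
\Phi:A^{n}\longrightarrow (A^{n})^{*},\qquad y\longmapsto \varphi_{y},\quad\text{where }\varphi_{y}(x)=\langle x,y\rangle=\sum_{i=1}^{n}x_{i}y_{i}.
\]
The bilinearity and symmetry of $\langle\cdot,\cdot\rangle$ make it routine to check that each $\varphi_{y}$ lies in $(A^{n})^{*}$ and that the assignment $y\mapsto\varphi_{y}$ is $A$-linear. The next step is to show $\Phi$ is bijective. Taking the canonical basis $(e_{i})_{1\le i\le n}$ of $A^{n}$ and its dual basis $(e_{i}^{*})_{1\le i\le n}$ in $(A^{n})^{*}$ (whose existence is guaranteed by Proposition 6.1.5 cited in the excerpt), one computes $\Phi(e_{j})(e_{i})=\langle e_{i},e_{j}\rangle=\delta_{ij}$, so $\Phi(e_{j})=e_{j}^{*}$. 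Hence $\Phi$ sends a basis of $A^{n}$ to a basis of $(A^{n})^{*}$, and therefore $\Phi$ is an isomorphism of $A$-modules.

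It then remains to identify the image of $C^{\perp}$ under $\Phi$ with $C^{\circ}$. For any $y\in A^{n}$, the chain of equivalences
\[
y\in C^{\perp}\iff \langle x,y\rangle=0\ \forall x\in C\iff \varphi_{y}(x)=0\ \forall x\in C\iff \varphi_{y}\in C^{\circ}
\]
follows directly from the definitions. Thus $\Phi(C^{\perp})=C^{\circ}$, and since $\Phi$ is injective the restriction $\Phi|_{C^{\perp}}:C^{\perp}\longrightarrow C^{\circ}$ is an $A$-module isomorphism.

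There is no real obstacle in this proof; the only care needed is to verify that $\Phi$ is indeed an isomorphism (rather than merely a morphism), which rests on the fact that the image of the canonical basis is exactly the dual basis. Once this is in hand, the preservation of orthogonality is immediate from the definitions of $C^{\perp}$ and $C^{\circ}$.
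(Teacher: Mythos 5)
Your proposal is correct and follows essentially the same route as the paper: both use the map $y\mapsto\langle\cdot,y\rangle$ from $A^{n}$ to $(A^{n})^{*}$, verify it is an isomorphism (the paper by checking injectivity and surjectivity directly, you by observing it carries the canonical basis to the dual basis), and then note that it restricts to an isomorphism of $C^{\perp}$ onto $C^{\circ}$.
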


\begin{proof}

\noindent Let
$$\begin{array}{ccccl}
   \varphi &:& A^{n}& \longrightarrow&(A^{n})^{*}\\
   &&x& \longmapsto &\varphi(x)
  \end{array}$$
with $\varphi(x)(y) = <x,y>$ for all $y\in A^{n}$. We show
that $\varphi$ is an isomorphism.
\begin{itemize}
\item It is easy to check that $\varphi$ is linear.
\item $\varphi$ is surjective : Let $f \in (A^{n})^{*}$ and let $(e_{1},...,e_{n})$ be the canonical basis of $A^{n}$. If $y = \sum\limits_{i=1}^{n}
y_{i}e_{i} \in A^{n}$, then $f(y) = \sum\limits_{i=1}^{n} y_{i}f(e_{i}) = <x,y>$ with $x = \sum\limits_{i=1}^{n} f(e_{i})e_{i}$. Therefore $f = <x,.> = \varphi(x)$.
\item $\varphi$ is injective : If $x \in Ker\varphi$. Then for all $y\in A^{n},\ \sum\limits_{i=1}^{n} x_{i}y_{i} = 0$. Especially for
$y = e_{i}$ we have $x_{i} = 0,$ for all $i = 1,\ldots,n$. Therefore $x = 0.$
\end{itemize}
$\varphi$ induces an isomorphism $C^{\perp} \cong \varphi(C^{\perp}) = C^{\circ}$

\end{proof}
\begin{thm}\label{p32}
Let $C$ be an $(n,k)$-code over $A$.
\begin{enumerate}
\item $C^{\perp\perp} = C$.
\item If $A$ is local, then $C^{\perp}$ is free of rank $n-k$.
\end{enumerate}
\end{thm}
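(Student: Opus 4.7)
My plan is to reduce both assertions to results already proved in Section 2 via the isomorphism $\varphi$ of Proposition \ref{p31}.

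For (1), I would first verify that $\varphi$ sends $C^{\perp}$ exactly onto $C^{\circ}$: one inclusion is immediate from $\varphi(x)(y)=\langle x,y\rangle$, and the reverse uses the surjectivity of $\varphi$ established in the proof of Proposition \ref{p31}. Then, using the symmetry of the bilinear form, the condition $z\in C^{\perp\perp}$ translates as $\langle z,y\rangle=\varphi(z)(y)=0$ for every $y\in C^{\perp}$, which is the same as $f(z)=0$ for every $f\in\varphi(C^{\perp})=C^{\circ}$, i.e., $z\in C^{\circ\circ}$. This gives $C^{\perp\perp}=C^{\circ\circ}$, and since $C$ is free, Theorem \ref{t1} finishes the argument.

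For (2), I would apply Lemma \ref{l21}: since $A$ is local and $C$ is free of rank $k$, there is a free complement $Q$ with $A^{n}=C\oplus Q$, and $A^{n}/C$ is itself free. Over a local ring free modules have a well-defined rank, so comparing ranks in the decomposition $A^{n}=C\oplus Q$ forces $Q$ to have rank $n-k$; the natural isomorphism $A^{n}/C\cong Q$ then gives the same for the quotient. Proposition 2 of Section 2 provides $C^{\circ}\cong(A^{n}/C)^{*}$, and Proposition 1 says that the dual of a free module of rank $n-k$ is free of rank $n-k$. Combined with the isomorphism $C^{\perp}\cong C^{\circ}$ of Proposition \ref{p31}, this yields that $C^{\perp}$ is free of rank $n-k$.

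The only delicate point is the identification $C^{\perp\perp}=C^{\circ\circ}$: one has to keep the two notions of ``dual'' (via the bilinear pairing and via linear functionals) clearly separated and then explicitly invoke the symmetry of $\langle\cdot,\cdot\rangle$ together with $\varphi(C^{\perp})=C^{\circ}$ to match them. Once this bookkeeping is in place, both statements are direct consequences of Theorem \ref{t1}, Lemma \ref{l21}, and the two cited structural propositions on free modules and their duals.
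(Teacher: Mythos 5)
Your proposal is correct and follows essentially the same route as the paper: part (1) is reduced via the isomorphism $\varphi$ of Proposition \ref{p31} (using $\varphi(C^{\perp})=C^{\circ}$ and the symmetry of $\langle\cdot,\cdot\rangle$) to the identity $C^{\circ\circ}=C$ of Theorem \ref{t1}, and part (2) combines Lemma \ref{l21}, the isomorphism $C^{\circ}\cong\left(A^{n}/C\right)^{*}$, and the rank-preservation of duals of free modules exactly as the paper does. Your explicit attention to the identification $C^{\perp\perp}=C^{\circ\circ}$ is in fact slightly more careful than the paper's wording, but it is the same argument.
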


\begin{proof}

\noindent
\begin{enumerate}
\item It is clear that $C \subseteq C^{\perp\perp}$. Let $x \in C^{\perp\perp}$, then for all $y \in C^{\perp},\ <x,y> = 0$. By the previous isomorphism between $C^{\perp}$ and $C^{\circ}$ we have for all $f\in C^{\circ}, f(x) = 0$. Thus $x \in C^{\circ\circ} = C,$ by Theorem 1.
\item By Proposition \ref{p31}, we have $C^{\perp} \cong C^{\circ}$  and by Proposition 2, we have $C^{\circ} \cong \left(A^{^n}\!\!\!\raisebox{-3pt}{$/$}\raisebox{-6pt}{$C$}\right)^{*}$. The ring $A$ is
local then by Lemma \ref{l21}, we have that $A^{n}/C$ is free and $rank\left(A^{^n}\!\!\!\raisebox{-3pt}{$/$}\raisebox{-6pt}{$C$}\right) = n-k$. Therefore
$\left(A^{^n}\!\!\!\raisebox{-3pt}{$/$}\raisebox{-6pt}{$C$}\right)^{*}$ is free and $rank\left(A^{^n}\!\!\!\raisebox{-3pt}{$/$}\raisebox{-6pt}{$C$}\right)^{*}
=
rank\left(A^{^n}\!\!\!\raisebox{-3pt}{$/$}\raisebox{-6pt}{$C$}\right)$ by Proposition 1. Thus $C^{\perp}$ is free and $rank(C^{\perp}) = n-k$.
\end{enumerate}\vspace{-.8cm}

\end{proof}

\section{Decoding linear codes}

\subsection{Syndrome decoding}
The principle of this method is to associate each received word after transmission, a quantity $S$ called syndrome. If the error is lightweight this one is
uniquely determined by $S$.\\
Let $C$ be an $(n,k)$-code over $A$, of minimal distance $d$ and $t = \left[\dfrac{d-1}{2}\right]$ the correction capacity of $C.$ Hence $C$ can
detect $(d-1)-$errors and correct $t-$errors [5].

\begin{defn}
Let $x \in A^{n}$. The syndrome of $x$ is
$$S(x) = (<x,y>)_{_{y \in C^{\perp}}}.$$
\end{defn}
We note that the map $S$ is additive. For all $x$ and $y$ in $A^n,\ S(x+y)=S(x)+S(y).$\\

The following result, which generalize the similar fact on fields, is the main tool that allows the code to detect errors.
\begin{prop}
Let $x \in A^{n}$. Then $x \in C$ if and only if $S(x) = 0$.
\end{prop}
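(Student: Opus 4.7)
The plan is to observe that this proposition is essentially a direct translation of the identity $C^{\perp\perp}=C$ established in Theorem \ref{p32}(1), combined with the definition of the syndrome map.

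First I would handle the easy direction. If $x\in C$, then by definition of $C^{\perp}$ every $y\in C^{\perp}$ satisfies $\langle x,y\rangle=0$, so the tuple $S(x)=(\langle x,y\rangle)_{y\in C^{\perp}}$ is identically zero. This uses nothing beyond the definitions.

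For the converse I would unpack $S(x)=0$ coordinate by coordinate: it states exactly that $\langle x,y\rangle=0$ for every $y\in C^{\perp}$. By symmetry of the bilinear form $\langle\cdot,\cdot\rangle$, this is the defining condition for $x$ to belong to $(C^{\perp})^{\perp}=C^{\perp\perp}$. Now I invoke Theorem \ref{p32}(1), which asserts $C^{\perp\perp}=C$ for any $(n,k)$-code, to conclude $x\in C$.

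There is no genuine obstacle here: the result is a clean corollary of the duality theorem already proved. The only thing worth being careful about is keeping the two uses of $\perp$ straight (the one in the definition of $C^\perp$ and the outer one in $C^{\perp\perp}$), and making sure to cite Theorem \ref{p32}(1) rather than re-proving bi-orthogonality. No hypothesis on $A$ beyond being a finite commutative ring is needed, since Theorem \ref{p32}(1) was already established at that level of generality via the structure theorem reduction to local rings.
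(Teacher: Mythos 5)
Your proof is correct and follows exactly the paper's own argument: the forward direction is immediate from the definition of $C^{\perp}$, and the converse reads off $x\in C^{\perp\perp}$ from $S(x)=0$ and invokes Theorem \ref{p32}(1). The remark that only freeness of $C$ (not locality of $A$) is needed is also accurate, since the section's standing assumption is that $C$ is an $(n,k)$-code.
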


\begin{proof}

\noindent
Suppose that $x \in C.$ Then $x\in C^{^{\perp\perp}}$ and for all $y\in C^{\perp},\ <x,y> = 0,$ which implies that $S(x) = 0.$\\
Conversely, if $S(x) = 0,$ then for all $y\in C^{\perp},\ <x,y> = 0.$ Hence $x \in C^{\perp\perp} = C$ by Theorem \ref{p32}.
\end{proof}
As in the case of linear codes on fields, two vectors have the same syndrome if and only if they have the same coset modulo $C:$
\begin{prop}
Let $x,y \in A^{n}$. Then $\bar{x} = \bar{y}$ in $A^{n}\!\!\raisebox{-2pt}{/}\raisebox{-4pt}{$C$}$ if and only if $S(x) = S(y)$.
\end{prop}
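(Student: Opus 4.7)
The plan is to chain together the quotient description of equality in $A^n/C$, the previous proposition characterizing membership in $C$ via the syndrome, and the additivity of $S$ noted just after the definition of the syndrome.

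First, I would rewrite the left-hand condition: $\bar{x} = \bar{y}$ in $A^n/C$ is by definition equivalent to $x - y \in C$. This reduces the statement to showing that $x - y \in C$ if and only if $S(x) = S(y)$.

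Next, I would invoke the preceding proposition, which states that $z \in C$ iff $S(z) = 0$. Applying this with $z = x - y$, the condition $x - y \in C$ becomes $S(x - y) = 0$.

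Finally, I would use the additivity of $S$ mentioned in the text, applied to $x$ and $-y$ (noting that additivity together with $S(0) = 0$ gives $S(-y) = -S(y)$). This yields
\[
S(x - y) = S(x) + S(-y) = S(x) - S(y),
\]
so $S(x - y) = 0$ is equivalent to $S(x) = S(y)$, completing the chain of equivalences. There is no real obstacle here; the only thing worth double-checking is that additivity (stated as $S(x+y) = S(x) + S(y)$) indeed gives $S(-y) = -S(y)$, which is immediate from $S(0) = S(y + (-y)) = S(y) + S(-y)$ together with the obvious $S(0) = 0$.
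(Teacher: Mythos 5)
Your proposal is correct and follows exactly the paper's argument: reduce $\bar{x}=\bar{y}$ to $x-y\in C$, apply the preceding proposition to get $S(x-y)=0$, and use additivity of $S$ to conclude $S(x)=S(y)$. Your extra remark that $S(-y)=-S(y)$ follows from additivity and $S(0)=0$ is a small detail the paper leaves implicit, but it is the same proof.
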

\begin{proof}
$$\begin{array}{ccl}
\bar{x} = \bar{y}&\Longleftrightarrow& x-y \in C\\
                 &\Longleftrightarrow& S(x-y) = 0\\
                 &\Longleftrightarrow& S(x) - S(y) = 0\\
                 &\Longleftrightarrow& S(x) = S(y).
  \end{array}$$\vspace{-10pt}
\end{proof}

\begin{cor}
If $r$ is the received word and $e$ the associated error vector. Then $c=r-e \in C$ and $S(r) = S(e)$.
\end{cor}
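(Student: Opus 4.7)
The plan is to observe that the corollary is an immediate consequence of the transmission model together with the two preceding propositions, and to present it that way rather than redo any work. In the standard channel model a received word $r$ is the sum of the transmitted codeword $c$ and an error vector $e$, so by definition $r=c+e$. The first assertion $c=r-e\in C$ is then just a rewriting of this identity, using only the hypothesis that the sent word $c$ lies in $C$.

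For the second assertion, I would offer two interchangeable routes and pick whichever is more concise. The first uses the additivity of $S$ (already noted after the definition of the syndrome): $S(r)=S(c+e)=S(c)+S(e)$, and then the preceding Proposition gives $S(c)=0$ since $c\in C$, so $S(r)=S(e)$. The second route uses the cosets: since $r-e=c\in C$, we have $\bar r=\bar e$ in $A^{n}/C$, and the previous Proposition (equating equal cosets with equal syndromes) yields $S(r)=S(e)$ directly.

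There is no real obstacle here; the corollary is essentially a restatement of what has just been proved and its role is to package the two facts for use in the decoding algorithm of the next subsection. The only small point to be careful about is to state clearly at the outset the convention $r=c+e$, so that the passage from $c=r-e$ to $c\in C$ is justified by the hypothesis ``$c$ is the sent codeword,'' rather than appearing as an unjustified algebraic move.
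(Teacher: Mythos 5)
Your proof is correct and is exactly the intended argument: the paper states this corollary without proof as an immediate consequence of the channel model $r=c+e$, the additivity of $S$, and the two preceding propositions, which is precisely what you spell out.
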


\begin{prop}
Let $e$ be the error vector. If $w(e) \leq t$ and $S(e) = S$. Then $e$ is the unique vector of weight $\leq t$ having syndrome $S$.
\end{prop}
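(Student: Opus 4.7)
The plan is the standard uniqueness argument via the minimum distance, exactly as for codes over fields. Suppose $e'\in A^{n}$ is any vector with $w(e')\le t$ and $S(e')=S$; I would aim to show $e'=e$, which gives uniqueness.

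First, I would combine the two hypotheses $S(e)=S$ and $S(e')=S$ using the additivity of $S$ (observed immediately after the definition of the syndrome) to obtain $S(e-e')=S(e)-S(e')=0$. Applying the proposition that characterizes the kernel of $S$ (namely, $S(x)=0$ iff $x\in C$), this yields $e-e'\in C$.

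Next, I would bound the Hamming weight of $e-e'$. Since the Hamming distance is a metric on $A^{n}$, the induced weight satisfies the triangle-type inequality $w(u-v)\le w(u)+w(v)$. Hence
\[
w(e-e')\le w(e)+w(e')\le 2t\le d-1<d.
\]
By the very definition of the minimal distance $d=d(C)$ as $\min\{w(c):c\in C\setminus\{0\}\}$, any codeword of weight strictly less than $d$ must be zero. Therefore $e-e'=0$, i.e., $e'=e$.

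There is no real obstacle here; the only thing that needs to be checked carefully is the inequality $2t\le d-1$, which is immediate from the choice $t=\left[\frac{d-1}{2}\right]$ made in the preamble of this subsection, and the fact that the Hamming weight satisfies the subadditivity $w(u-v)\le w(u)+w(v)$, which follows from the metric property of the Hamming distance already stated in the excerpt.
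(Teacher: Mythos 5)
Your proof is correct and follows essentially the same route as the paper: deduce $e-e'\in C$ from the equality of syndromes, bound $w(e-e')\le w(e)+w(e')\le 2t<d$ via the triangle inequality for the Hamming distance, and conclude $e-e'=0$ from the definition of the minimal distance. No meaningful difference from the paper's argument.
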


\begin{proof}

\noindent
Let $e' \in A^{n}$ such that $w(e') \leq t$ and $S(e) = S(e')$ then $e-e' \in C$.
$$\begin{array}{ccl}w(e-e') &=& d(e-e',0)\\
       &=& d(e,e')\\
       &\leq& d(e,0) + d(0,e')\\
       &\leq& w(e) + w(e')\\
        &\leq& 2t < d
   \end{array}$$
\\So $e-e'= 0$ and $e = e'$.\\
\end{proof}

We are, now, able to present the syndrome decoding algorithm.
\newpage
\begin{algorithm}
\caption{The Syndrome Decoding Algorithm}
\begin{algorithmic}[1]
\State Compute the syndrome $S$ of the received word $r$
\State Compute the syndrome of all vectors $e$ of weight $\leq t$
\If {there exists no vector $e$ of weight $\le t$ and syndrome $S$}
\State The algorithm fails\hspace{2cm} \begin{minipage}{5cm}\begin{verbatim}/*more than $t-$errors occur*/\end{verbatim}\end{minipage}
\Else
\State Determine the unique vector $e$ of weight $\leq t$ with syndrome $S$
\State Decode $r$ by $c=r-e \in C$
\EndIf
\end{algorithmic}
\end{algorithm}

\subsection{Control Matrix}
Throughout this part we assume that $A$ is local. Let $C$ be an $(n,k)$-code over $A$ of generator matrix $G$. By Theorem \ref{p32}, we have that
$C^{\perp}$
is free and $rank(C^{\perp}) = n-k$.

\begin{defn}
A generator matrix of $C^{\perp}$ is called control matrix of $C$.
\end{defn}

Let $H$ be a control matrix of $C$. The row vectors of $G$ form a basis of $C$ and the column vectors of $H^{t}$ form a basis of $C^{\perp}$, thus $GH^{t} =
0$. The following theorem gives a necessary and sufficient condition so that a matrix $H \in M_{n-k,n}(A)$ is a control matrix of $C$.

\begin{thm}
Let $H \in M_{n-k,n}(A)$. Then $H$ is a control matrix of $C$ if and only if $GH^{t} = 0$ and row vectors of $H$ are linearly independent.
\end{thm}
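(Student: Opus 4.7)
I would prove the two implications in turn, with the substance concentrated in the ``if'' direction. The ``only if'' direction is a direct verification: if $H$ is a control matrix, then by definition its rows form a basis of $C^{\perp}$, which gives their linear independence, and the $(i,j)$-entry of $GH^{t}$ is the bilinear pairing $\langle g_i, h_j\rangle$ of a basis vector of $C$ with an element of $C^{\perp}$, which vanishes.

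For the converse, suppose $GH^{t}=0$ and that the rows of $H$ are linearly independent. First, I would note that each row of $H$ already lies in $C^{\perp}$: since the rows of $G$ form a basis of $C$, the condition $GH^{t}=0$ says precisely that every row $h_j$ of $H$ pairs to $0$ with every basis vector of $C$, hence with every codeword. Let $H'$ denote the submodule of $C^{\perp}$ generated by the rows of $H$. To conclude that $H$ is a control matrix it then remains to show $H' = C^{\perp}$; linear independence of the rows automatically upgrades this generating set to a basis.

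The key step is thus the equality $H' = C^{\perp}$, and here the standing finiteness of $A$ does the real work. Linear independence of the rows of $H$ means that the $A$-linear map $\phi\colon A^{n-k}\to A^{n}$, $(a_1,\dots,a_{n-k})\mapsto \sum a_i h_i$, is injective, so $|H'| = |A|^{n-k}$. By Theorem \ref{p32}, $C^{\perp}$ is free of rank $n-k$ over the local ring $A$, so $|C^{\perp}| = |A|^{n-k}$ as well. Since $A$ is finite, $H'\subseteq C^{\perp}$ is an inclusion of finite sets of equal cardinality, forcing $H' = C^{\perp}$. The main obstacle, and the reason the argument really needs $A$ to be finite, lies in this last step: over a general local ring a proper free submodule can have the same rank as the ambient free module (for instance $tA[[t]]\subsetneq A[[t]]$), so a purely rank-based dimension count would not suffice; converting the rank equality into a cardinality equality is what allows the inclusion to be promoted to equality.
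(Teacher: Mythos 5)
Your proposal is correct and follows essentially the same route as the paper: both arguments place the rows of $H$ in $C^{\perp}$ via $GH^{t}=0$, observe that their span is free of rank $n-k$ by linear independence, and then use finiteness of $A$ to promote the inclusion of two free modules of equal rank (hence equal cardinality $|A|^{n-k}$) into an equality. Your explicit remark that the cardinality count, not the rank count, is what closes the argument is a point the paper's proof makes only implicitly.
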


\begin{proof}

\noindent
The necessary condition is a consequence of the definition of control matrix. Conversely, suppose that $GH^{t}=0$ and $(e_{1},...,e_{n-k})$,
the row vectors of $H$, are linearly independent. Then for all $1\leq i \leq n-k,\ e_{_i} \in C^{^{\perp}}$. Hence $\bigoplus\limits_{i=1}^{n-k} Ae_{i} \subseteq
C^{\perp}$ and $rg(\bigoplus\limits_{i=1}^{n-k} Ae_{i}) = n-k = rg(C^{\perp})$. Therefore $\bigoplus\limits_{i=1}^{n-k} Ae_{i} \cong C^{\perp}$ and
$\mid\bigoplus\limits_{i=1}^{n-k} Ae_{i}\mid = \mid C^{\perp} \mid$. Thus $\bigoplus\limits_{i=1}^{n-k} Ae_{i} = C^{\perp}$ and $(e_{i})_{1\leq i \leq n-k}$
is a basis of $C^{\perp}.$
\end{proof}
\begin{cor}
If the generator matrix of $C$ is in standard form (i.e) $G = (I_{k},P)$. Then $H = (-P^{t},I_{n-k})$ is a control matrix of $C$, with
$P \in M_{k,n-k}(A)$ and $I_{k}$ denotes the identity matrix of order $k$.
\end{cor}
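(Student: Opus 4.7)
The plan is to verify the two hypotheses of the preceding theorem for the matrix $H = (-P^{t}, I_{n-k})$, namely that $GH^{t} = 0$ and that the rows of $H$ are linearly independent. Since the ring $A$ is assumed local in this subsection, the theorem then immediately gives that $H$ is a control matrix of $C$.

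First I would carry out the block-matrix computation of $GH^{t}$. Transposing block by block, we have
$$H^{t} = \begin{pmatrix} -P \\ I_{n-k} \end{pmatrix},$$
with $-P \in M_{k,n-k}(A)$ and $I_{n-k} \in M_{n-k,n-k}(A)$. Multiplying in blocks gives
$$GH^{t} = (I_{k} \mid P)\begin{pmatrix} -P \\ I_{n-k} \end{pmatrix} = I_{k}(-P) + P\,I_{n-k} = -P + P = 0,$$
which handles the first condition.

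Next I would check linear independence of the rows of $H$. This is where the block structure $H = (-P^{t}, I_{n-k})$ is crucial: the last $n-k$ columns of $H$ form the identity matrix $I_{n-k}$. Hence if $\alpha_{1},\dots,\alpha_{n-k} \in A$ satisfy $\sum_{i=1}^{n-k}\alpha_{i}h_{i} = 0$, restricting to the last $n-k$ coordinates reads precisely $\alpha_{i} = 0$ for every $i$. So the rows are linearly independent.

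With both hypotheses verified, the theorem immediately concludes the proof. There is no real obstacle here; the only thing to be careful about is keeping the block sizes consistent ($P \in M_{k,n-k}(A)$, so $-P^{t} \in M_{n-k,k}(A)$ and the concatenation $H = (-P^{t}, I_{n-k})$ lies in $M_{n-k,n}(A)$), which matches the hypothesis of the theorem and hence allows its direct application.
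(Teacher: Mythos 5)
Your proposal is correct and follows essentially the same route as the paper: verify $GH^{t}=-P+P=0$ by block multiplication and observe that the identity block $I_{n-k}$ forces the rows of $H$ to be linearly independent, then invoke the preceding theorem. You simply spell out the details the paper leaves as ``clear.''
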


\begin{proof}

\noindent
We have $GH^{t} = -P + P = 0$ and it is clear that the lines of $H$ are linearly independent.\vspace{-.7cm}
\end{proof}

\begin{prop}
Let $H$ be a matrix control of $C$. Then $\forall x,y \in A^{n}$ :
\begin{enumerate}
\item $S(x) = 0$ if and only if $Hx^{t} = 0$
\item $S(x) = S(y)$ if and only if $Hx^{t} = Hy^{t}$.
\end{enumerate}
\end{prop}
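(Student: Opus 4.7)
The plan is to reduce both statements to the observation that the rows of the control matrix $H$ form a basis of $C^{\perp}$, so the vanishing of $\langle x, y\rangle$ on \emph{all} of $C^{\perp}$ is equivalent to its vanishing on just these generators.

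For part 1, I would first unpack the definition: $S(x) = 0$ means $\langle x, y\rangle = 0$ for every $y \in C^{\perp}$. The forward direction is then immediate, because each row $e_i$ of $H$ lies in $C^{\perp}$, and by definition of the pairing the $i$th coordinate of the column vector $Hx^t$ is exactly $\langle e_i, x\rangle$; hence $Hx^t = 0$. For the converse, I would pick an arbitrary $y \in C^{\perp}$ and write $y = \sum_{i=1}^{n-k} a_i e_i$ using that the $e_i$ form a basis of $C^{\perp}$; then bilinearity of $\langle\,,\,\rangle$ gives $\langle x, y\rangle = \sum a_i \langle x, e_i\rangle = 0$, so $S(x) = 0$.

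For part 2, I would exploit the additivity of $S$ noted in the text and the linearity of the map $z \mapsto Hz^t$. Concretely, $S(x) = S(y) \Leftrightarrow S(x-y) = 0$, and by part 1 this is equivalent to $H(x-y)^t = 0$, i.e.\ $Hx^t = Hy^t$. This part is essentially a one-line corollary of part 1.

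I do not anticipate a real obstacle: the only subtlety is bookkeeping the identification between the pairing $\langle x, e_i\rangle$ and the entries of the column vector $Hx^t$, and making explicit use of the fact, already established in Theorem \ref{p32} and the definition of a control matrix, that the rows of $H$ actually constitute a basis (not just a generating set) of $C^{\perp}$. Everything else is formal manipulation using bilinearity and additivity of $S$.
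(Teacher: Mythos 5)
Your proposal is correct and follows essentially the same route as the paper: identify the coordinates of $Hx^{t}$ with the pairings $\langle x,e_{i}\rangle$ against the rows of $H$, use that these rows form a basis of $C^{\perp}$ to pass from vanishing on the generators to vanishing on all of $C^{\perp}$, and deduce part 2 from part 1 by additivity. No discrepancies worth noting.
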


\begin{proof}

\noindent
Let
$$H = \left(
                                      \begin{array}{cccc}
                                        e_{_{11}} &\cdots&\cdots  & e_{_{1n}} \\
                                         \vdots&  &  &\vdots  \\
                                        e_{_{n-k,1}} &\cdots & \cdots & e_{_{n-k,n}} \\
                                      \end{array}
                                    \right)$$
be a control matrix of $C$, with $e_{i} = (e_{i1},...,e_{in})$ in $C^{\perp}$ for all $i = 1,...,n-k$ \\
If $x = (x_{1},...,x_{n}) \in A^{n}$ then $Hx^{t} = (\sum\limits_{j=1}^{n} x_{j}e_{ij})_{1\leq i \leq n-k} = (<x,e_{i}>)_{1\leq i \leq n-k}$.
\begin{enumerate}
 \item\label{1} If $S(x) = 0$ then for all $y\in C^{\perp},\ <x,y> = 0.$ Hence for all $i=1,\ldots,n-k,\ \left<x,e_{_i}\right> = 0$ and  $Hx^{t} = 0.$\\
Conversely, suppose that $Hx^{t}=0$. Let $z = \sum\limits_{i=1}^{n-k} a_{i}e_{i} \in C^{\perp}$ Then
$$\begin{array}{ccl}
<x,z> &=&\left<x,\sum\limits_{i=1}^{n-k} a_{i}e_{_i}\right>\\
      &=& \sum\limits_{i=1}^{n-k}a_{i}<x,e_{i}> = 0.
  \end{array}$$
Therefore $S(x) = 0$.
\item Straightforward from $(\ref{1}).$
\end{enumerate}\vspace{-.7cm}
\end{proof}

The minimal distance of a code $C$ is an important factor in the decoding algorithm of linear codes. It allows us to determine the correction capability of
the code. The following proposition gives us a way to find the minimal distance by using the control matrix.

\begin{prop}[\cite{1}, Proposition 7]\label{p46}
Let $H$ be a control matrix of $C$. Then the minimal distance of $C$ is the minimal number of dependent columns of $H$.
\end{prop}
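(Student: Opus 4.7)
Let $C_{1},\ldots,C_{n}$ denote the columns of $H$. The key observation is that for any vector $x=(x_{1},\ldots,x_{n})\in A^{n}$, the matrix product $Hx^{t}$ can be written as the linear combination of columns
$$Hx^{t}=\sum_{i=1}^{n}x_{i}\,C_{i}.$$
By the previous proposition, a vector $x$ lies in $C$ if and only if $Hx^{t}=0$, i.e.\ if and only if $\sum_{i=1}^{n}x_{i}C_{i}=0$. Moreover, $w(x)$ equals the cardinality of the support $\operatorname{supp}(x)=\{i:x_{i}\neq 0\}$, and only those indices actually contribute to the above sum. These remarks together form the dictionary between codewords and column dependences.

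\textbf{First inequality.} Take any nonzero $x\in C$ realising $w(x)=d(C)=d$. Writing $\operatorname{supp}(x)=\{i_{1},\ldots,i_{d}\}$ with $x_{i_{j}}\neq 0$, the relation above becomes $\sum_{j=1}^{d}x_{i_{j}}C_{i_{j}}=0$, a nontrivial dependence (with all coefficients nonzero) among exactly $d$ columns of $H$. Hence the minimum number of dependent columns of $H$ is at most $d$.

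\textbf{Second inequality.} Conversely, suppose $C_{i_{1}},\ldots,C_{i_{m}}$ are dependent columns, $m$ minimal, with $\sum_{j=1}^{m}a_{j}C_{i_{j}}=0$ and not all $a_{j}$ zero; minimality of $m$ forces every $a_{j}$ to be nonzero (otherwise we would delete those columns and obtain a strictly smaller dependent set). Define $x\in A^{n}$ by $x_{i_{j}}=a_{j}$ for $1\le j\le m$ and $x_{i}=0$ otherwise. Then $Hx^{t}=0$, so $x\in C$, and by construction $w(x)=m$. Therefore $d(C)\le m$, which combined with the first inequality yields $d(C)=m$.

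\textbf{Anticipated difficulty.} The only subtle point, and the one worth flagging explicitly, is the passage from an arbitrary dependence relation to one whose coefficients are all nonzero. In the field case this is immediate; over the ring $A$ the argument is the same minimality trick, but since the statement merely asks for the \emph{minimum} number of dependent columns, I expect no real obstruction—one just needs to make clear in writing that at the minimum, no coefficient can vanish, for otherwise that column could be dropped, contradicting minimality.
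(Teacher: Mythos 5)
The paper does not prove this proposition at all: it is imported by citation from reference [1] (Proposition 7 there), so there is no internal proof to compare your argument against. Your proof is correct and is the standard argument, correctly transplanted from the field case. The dictionary $Hx^{t}=\sum_{i}x_{i}C_{i}$ together with the equivalence $x\in C\Leftrightarrow Hx^{t}=0$ (which indeed follows from the two preceding propositions of the paper, themselves resting on $C^{\perp\perp}=C$ and the standing hypothesis that $A$ is local so that $H$ exists) turns nonzero codewords of weight $m$ into dependence relations with all coefficients nonzero among $m$ columns, and conversely. The one genuinely ring-specific point is the one you flag: over a ring with zero divisors a relation $\sum a_{j}C_{i_{j}}=0$ with the $a_{j}$ not all zero may have some vanishing coefficients, and "dependent" must be read in that weak sense (so that, e.g., over $\Z/4\Z$ a single column annihilated by $2$ is already a dependent set of size one, forcing $d(C)=1$, consistently with the codeword $2e_{i}$). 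Your minimality trick --- discard the columns with zero coefficient to get a strictly smaller dependent set --- disposes of this correctly, so both inequalities hold and the proof is complete.
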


\begin{exmp}
We give an example illustrating the concepts studied above. The computations are simple but tedious, therefore we have used the computer algebra system Maple
to verify the calculations [4].\\
Let $A$ be the finite commutative local ring $\Z/4\Z$. Let $C$ be the $(20,10)$-linear code over $A$ of generator matrix
\begin{center}
  $G = (I_{10},P)$
\end{center}
with
\begin{center}
  $P = \left(
         \begin{array}{cccccccccc}
           1 & 0 & 3 & 0 & 1 & 3 & 0 & 2 & 2 & 0 \\
           1 & 3 & 0 & 3 & 2 & 1 & 1 & 3 & 3 & 3 \\
           3 & 1 & 2 & 0 & 1 & 1 & 3 & 2 & 3 & 0 \\
           2 & 0 & 2 & 2 & 2 & 3 & 3 & 3 & 3 & 3 \\
           0 & 0 & 3 & 0 & 0 & 0 & 2 & 0 & 0 & 0 \\
           3 & 1 & 3 & 2 & 3 & 3 & 3 & 1 & 2 & 2 \\
           2 & 0 & 0 & 1 & 2 & 1 & 0 & 1 & 1 & 2 \\
           2 & 0 & 1 & 3 & 1 & 1 & 1 & 0 & 3 & 1 \\
           0 & 2 & 1 & 1 & 2 & 2 & 1 & 3 & 0 & 3 \\
           0 & 0 & 1 & 2 & 1 & 2 & 2 & 0 & 1 & 1 \\
         \end{array}
       \right)$
\end{center}
We begin by computing the control matrix $H$ for the code $C$.
\begin{verbatim}
P := <<1,1,3,2,0,3,2,2,0,0> | <0,3,1,0,0,1,0,0,2,0> |
      <3,0,2,2,3,3,0,1,1,1> | <0,3,0,2,0,2,1,3,1,2> |
      <1,2,1,2,0,3,2,1,2,1> | <3,1,1,3,0,3,1,1,2,2> |
      <0,1,3,3,2,3,0,1,1,2> | <2,3,2,3,0,1,1,0,3,0> |
      <2,3,3,3,0,2,1,3,0,1> | <0,3,0,3,0,2,2,1,3,1>>;
H := <-Transpose(P)| IdentityMatrix(10)> mod 4:
evalm(H);

\end{verbatim}
\begin{center}
 $ H := \left(
 \begin{array}{cccccccccccccccccccc}
         3 & 3 & 1 & 2 & 0 & 1 & 2 & 2 & 0 & 0 & 1 & 0 & 0 & 0 & 0 & 0 & 0 & 0 & 0 & 0 \\
         0 & 1 & 3 & 0 & 0 & 3 & 0 & 0 & 2 & 0 & 0 & 1 & 0 & 0 & 0 & 0 & 0 & 0 & 0 & 0 \\
         1 & 0 & 2 & 2 & 1 & 1 & 0 & 3 & 3 & 3 & 0 & 0 & 1 & 0 & 0 & 0 & 0 & 0 & 0 & 0 \\
         0 & 1 & 0 & 2 & 0 & 2 & 3 & 1 & 3 & 2 & 0 & 0 & 0 & 1 & 0 & 0 & 0 & 0 & 0 & 0 \\
         3 & 2 & 3 & 2 & 0 & 1 & 2 & 3 & 2 & 3 & 0 & 0 & 0 & 0 & 1 & 0 & 0 & 0 & 0 & 0 \\
         1 & 3 & 3 & 1 & 0 & 1 & 3 & 3 & 2 & 2 & 0 & 0 & 0 & 0 & 0 & 1 & 0 & 0 & 0 & 0 \\
         0 & 3 & 1 & 1 & 2 & 1 & 0 & 3 & 3 & 2 & 0 & 0 & 0 & 0 & 0 & 0 & 1 & 0 & 0 & 0 \\
         2 & 1 & 2 & 1 & 0 & 3 & 3 & 0 & 1 & 0 & 0 & 0 & 0 & 0 & 0 & 0 & 0 & 1 & 0 & 0 \\
         2 & 1 & 1 & 1 & 0 & 2 & 3 & 1 & 0 & 3 & 0 & 0 & 0 & 0 & 0 & 0 & 0 & 0 & 1 & 0 \\
         0 & 1 & 0 & 1 & 0 & 2 & 2 & 3 & 1 & 3 & 0 & 0 & 0 & 0 & 0 & 0 & 0 & 0 & 0 & 1
       \end{array}
       \right)$
\end{center}
By proposition \ref{p46}, the minimal distance of $C$ is $d = 3$ and this permits to detect $2$ errors and correct $t =\left[\dfrac{d-1}{2}\right]=1$ error.
\\Let $c = 10202230013001002303
  \in C$ the transmitted codeword and
$r = 10202130013001002303$ the received noisy word. To determine if an error exists in this received word, we compute the syndrome of $r$.
\\Firstly, we define a procedure that returns the syndrome of a vector
\begin{verbatim}
Syndrome := proc(A,x):
return MatrixVectorMultiply(A,x) mod 4;
end proc:
\end{verbatim}
We compute the syndrome of $r$
\begin{verbatim}
Syndrome(H,r);
\end{verbatim}
\begin{center}
$ \left(
    \begin{array}{c}
      3 \\
      1 \\
      3 \\
      2 \\
      3 \\
      3 \\
      3 \\
      1 \\
      2 \\
      2 \\
    \end{array}
  \right)$
\end{center}
Because this syndrome is nonzero, we know $r$ is erroneous. To find the error in $r$ we must Compute the syndrome of all vectors of $A^{20}$ of weight $1$ and compare them with the syndrome of $r$. Therefore, We collect these vectors in a matrix $E$.
\begin{verbatim}
E := <IdentityMatrix(20) | 2*IdentityMatrix(20) | 3*IdentityMatrix(20)>:
evalm(E);
\end{verbatim}
$$E=\left(\arraycolsep=1.1pt
         \begin{array}{cccccccccccccccccccccccccccccccccccccccccccccccccccccccccccc}
           1 & 0 & 0 & 0 & 0 & 0 & 0 & 0 & 0 & 0 & 0 & 0 & 0 & 0 & 0 & 0 & 0 & 0 & 0 & 0 & 2 & 0 & 0 & 0 & 0 & 0 & 0 & 0 & 0 & 0 & 0 & 0 & 0 & 0 & 0 & 0 & 0 & 0 & 0 & 0 & 3 & 0 & 0 & 0 & 0 & 0 & 0 & 0 & 0 & 0 & 0 & 0 & 0 & 0 & 0 & 0 & 0 & 0 & 0 & 0 \\
           0 & 1 & 0 & 0 & 0 & 0 & 0 & 0 & 0 & 0 & 0 & 0 & 0 & 0 & 0 & 0 & 0 & 0 & 0 & 0 & 0 & 2 & 0 & 0 & 0 & 0 & 0 & 0 & 0 & 0 & 0 & 0 & 0 & 0 & 0 & 0 & 0 & 0 & 0 & 0 & 0 & 3 & 0 & 0 & 0 & 0 & 0 & 0 & 0 & 0 & 0 & 0 & 0 & 0 & 0 & 0 & 0 & 0 & 0 & 0 \\
           0 & 0 & 1 & 0 & 0 & 0 & 0 & 0 & 0 & 0 & 0 & 0 & 0 & 0 & 0 & 0 & 0 & 0 & 0 & 0 & 0 & 0 & 2 & 0 & 0 & 0 & 0 & 0 & 0 & 0 & 0 & 0 & 0 & 0 & 0 & 0 & 0 & 0 & 0 & 0 & 0 & 0 & 3 & 0 & 0 & 0 & 0 & 0 & 0 & 0 & 0 & 0 & 0 & 0 & 0 & 0 & 0 & 0 & 0 & 0 \\
           0 & 0 & 0 & 1 & 0 & 0 & 0 & 0 & 0 & 0 & 0 & 0 & 0 & 0 & 0 & 0 & 0 & 0 & 0 & 0 & 0 & 0 & 0 & 2 & 0 & 0 & 0 & 0 & 0 & 0 & 0 & 0 & 0 & 0 & 0 & 0 & 0 & 0 & 0 & 0 & 0 & 0 & 0 & 3 & 0 & 0 & 0 & 0 & 0 & 0 & 0 & 0 & 0 & 0 & 0 & 0 & 0 & 0 & 0 & 0 \\
           0 & 0 & 0 & 0 & 1 & 0 & 0 & 0 & 0 & 0 & 0 & 0 & 0 & 0 & 0 & 0 & 0 & 0 & 0 & 0 & 0 & 0 & 0 & 0 & 2 & 0 & 0 & 0 & 0 & 0 & 0 & 0 & 0 & 0 & 0 & 0 & 0 & 0 & 0 & 0 & 0 & 0 & 0 & 0 & 3 & 0 & 0 & 0 & 0 & 0 & 0 & 0 & 0 & 0 & 0 & 0 & 0 & 0 & 0 & 0 \\
           0 & 0 & 0 & 0 & 0 & 1 & 0 & 0 & 0 & 0 & 0 & 0 & 0 & 0 & 0 & 0 & 0 & 0 & 0 & 0 & 0 & 0 & 0 & 0 & 0 & 2 & 0 & 0 & 0 & 0 & 0 & 0 & 0 & 0 & 0 & 0 & 0 & 0 & 0 & 0 & 0 & 0 & 0 & 0 & 0 & 3 & 0 & 0 & 0 & 0 & 0 & 0 & 0 & 0 & 0 & 0 & 0 & 0 & 0 & 0 \\
           0 & 0 & 0 & 0 & 0 & 0 & 1 & 0 & 0 & 0 & 0 & 0 & 0 & 0 & 0 & 0 & 0 & 0 & 0 & 0 & 0 & 0 & 0 & 0 & 0 & 0 & 2 & 0 & 0 & 0 & 0 & 0 & 0 & 0 & 0 & 0 & 0 & 0 & 0 & 0 & 0 & 0 & 0 & 0 & 0 & 0 & 3 & 0 & 0 & 0 & 0 & 0 & 0 & 0 & 0 & 0 & 0 & 0 & 0 & 0 \\
           0 & 0 & 0 & 0 & 0 & 0 & 0 & 1 & 0 & 0 & 0 & 0 & 0 & 0 & 0 & 0 & 0 & 0 & 0 & 0 & 0 & 0 & 0 & 0 & 0 & 0 & 0 & 2 & 0 & 0 & 0 & 0 & 0 & 0 & 0 & 0 & 0 & 0 & 0 & 0 & 0 & 0 & 0 & 0 & 0 & 0 & 0 & 3 & 0 & 0 & 0 & 0 & 0 & 0 & 0 & 0 & 0 & 0 & 0 & 0 \\
           0 & 0 & 0 & 0 & 0 & 0 & 0 & 0 & 1 & 0 & 0 & 0 & 0 & 0 & 0 & 0 & 0 & 0 & 0 & 0 & 0 & 0 & 0 & 0 & 0 & 0 & 0 & 0 & 2 & 0 & 0 & 0 & 0 & 0 & 0 & 0 & 0 & 0 & 0 & 0 & 0 & 0 & 0 & 0 & 0 & 0 & 0 & 0 & 3 & 0 & 0 & 0 & 0 & 0 & 0 & 0 & 0 & 0 & 0 & 0 \\
           0 & 0 & 0 & 0 & 0 & 0 & 0 & 0 & 0 & 1 & 0 & 0 & 0 & 0 & 0 & 0 & 0 & 0 & 0 & 0 & 0 & 0 & 0 & 0 & 0 & 0 & 0 & 0 & 0 & 2 & 0 & 0 & 0 & 0 & 0 & 0 & 0 & 0 & 0 & 0 & 0 & 0 & 0 & 0 & 0 & 0 & 0 & 0 & 0 & 3 & 0 & 0 & 0 & 0 & 0 & 0 & 0 & 0 & 0 & 0 \\
           0 & 0 & 0 & 0 & 0 & 0 & 0 & 0 & 0 & 0 & 1 & 0 & 0 & 0 & 0 & 0 & 0 & 0 & 0 & 0 & 0 & 0 & 0 & 0 & 0 & 0 & 0 & 0 & 0 & 0 & 2 & 0 & 0 & 0 & 0 & 0 & 0 & 0 & 0 & 0 & 0 & 0 & 0 & 0 & 0 & 0 & 0 & 0 & 0 & 0 & 3 & 0 & 0 & 0 & 0 & 0 & 0 & 0 & 0 & 0 \\
           0 & 0 & 0 & 0 & 0 & 0 & 0 & 0 & 0 & 0 & 0 & 1 & 0 & 0 & 0 & 0 & 0 & 0 & 0 & 0 & 0 & 0 & 0 & 0 & 0 & 0 & 0 & 0 & 0 & 0 & 0 & 2 & 0 & 0 & 0 & 0 & 0 & 0 & 0 & 0 & 0 & 0 & 0 & 0 & 0 & 0 & 0 & 0 & 0 & 0 & 0 & 3 & 0 & 0 & 0 & 0 & 0 & 0 & 0 & 0 \\
           0 & 0 & 0 & 0 & 0 & 0 & 0 & 0 & 0 & 0 & 0 & 0 & 1 & 0 & 0 & 0 & 0 & 0 & 0 & 0 & 0 & 0 & 0 & 0 & 0 & 0 & 0 & 0 & 0 & 0 & 0 & 0 & 2 & 0 & 0 & 0 & 0 & 0 & 0 & 0 & 0 & 0 & 0 & 0 & 0 & 0 & 0 & 0 & 0 & 0 & 0 & 0 & 3 & 0 & 0 & 0 & 0 & 0 & 0 & 0 \\
           0 & 0 & 0 & 0 & 0 & 0 & 0 & 0 & 0 & 0 & 0 & 0 & 0 & 1 & 0 & 0 & 0 & 0 & 0 & 0 & 0 & 0 & 0 & 0 & 0 & 0 & 0 & 0 & 0 & 0 & 0 & 0 & 0 & 2 & 0 & 0 & 0 & 0 & 0 & 0 & 0 & 0 & 0 & 0 & 0 & 0 & 0 & 0 & 0 & 0 & 0 & 0 & 0 & 3 & 0 & 0 & 0 & 0 & 0 & 0 \\
           0 & 0 & 0 & 0 & 0 & 0 & 0 & 0 & 0 & 0 & 0 & 0 & 0 & 0 & 1 & 0 & 0 & 0 & 0 & 0 & 0 & 0 & 0 & 0 & 0 & 0 & 0 & 0 & 0 & 0 & 0 & 0 & 0 & 0 & 2 & 0 & 0 & 0 & 0 & 0 & 0 & 0 & 0 & 0 & 0 & 0 & 0 & 0 & 0 & 0 & 0 & 0 & 0 & 0 & 3 & 0 & 0 & 0 & 0 & 0 \\
           0 & 0 & 0 & 0 & 0 & 0 & 0 & 0 & 0 & 0 & 0 & 0 & 0 & 0 & 0 & 1 & 0 & 0 & 0 & 0 & 0 & 0 & 0 & 0 & 0 & 0 & 0 & 0 & 0 & 0 & 0 & 0 & 0 & 0 & 0 & 2 & 0 & 0 & 0 & 0 & 0 & 0 & 0 & 0 & 0 & 0 & 0 & 0 & 0 & 0 & 0 & 0 & 0 & 0 & 0 & 3 & 0 & 0 & 0 & 0 \\
           0 & 0 & 0 & 0 & 0 & 0 & 0 & 0 & 0 & 0 & 0 & 0 & 0 & 0 & 0 & 0 & 1 & 0 & 0 & 0 & 0 & 0 & 0 & 0 & 0 & 0 & 0 & 0 & 0 & 0 & 0 & 0 & 0 & 0 & 0 & 0 & 2 & 0 & 0 & 0 & 0 & 0 & 0 & 0 & 0 & 0 & 0 & 0 & 0 & 0 & 0 & 0 & 0 & 0 & 0 & 0 & 3 & 0 & 0 & 0 \\
           0 & 0 & 0 & 0 & 0 & 0 & 0 & 0 & 0 & 0 & 0 & 0 & 0 & 0 & 0 & 0 & 0 & 1 & 0 & 0 & 0 & 0 & 0 & 0 & 0 & 0 & 0 & 0 & 0 & 0 & 0 & 0 & 0 & 0 & 0 & 0 & 0 & 2 & 0 & 0 & 0 & 0 & 0 & 0 & 0 & 0 & 0 & 0 & 0 & 0 & 0 & 0 & 0 & 0 & 0 & 0 & 0 & 3 & 0 & 0 \\
           0 & 0 & 0 & 0 & 0 & 0 & 0 & 0 & 0 & 0 & 0 & 0 & 0 & 0 & 0 & 0 & 0 & 0 & 1 & 0 & 0 & 0 & 0 & 0 & 0 & 0 & 0 & 0 & 0 & 0 & 0 & 0 & 0 & 0 & 0 & 0 & 0 & 0 & 2 & 0 & 0 & 0 & 0 & 0 & 0 & 0 & 0 & 0 & 0 & 0 & 0 & 0 & 0 & 0 & 0 & 0 & 0 & 0 & 3 & 0 \\
           0 & 0 & 0 & 0 & 0 & 0 & 0 & 0 & 0 & 0 & 0 & 0 & 0 & 0 & 0 & 0 & 0 & 0 & 0 & 1 & 0 & 0 & 0 & 0 & 0 & 0 & 0 & 0 & 0 & 0 & 0 & 0 & 0 & 0 & 0 & 0 & 0 & 0 & 0 & 2 & 0 & 0 & 0 & 0 & 0 & 0 & 0 & 0 & 0 & 0 & 0 & 0 & 0 & 0 & 0 & 0 & 0 & 0 & 0 & 3 \\
         \end{array}\right)$$
We use the following commands to find the column in $E$ that matches the error.
\begin{verbatim}
fc := 0:
cn := 0:
while (fc <> 1) and (cn < 60)do
cn := cn+1;
if Equal(Syndrome(H,Column(E,cn)), Syndrome(H,r)) = true then
fc := 1;
fi;
od:
\end{verbatim}
\begin{verbatim}
cn:
\end{verbatim}
\begin{center}
  $46$\vspace{1cm}
\end{center}

This value for cn indicates that the error is the $46^{^{\mbox{th}}}$ column. We can then see that the error vector that corresponds to $r$ as follows
\begin{verbatim}
error := Column(E,cn):
evalm(error);
\end{verbatim}
\begin{center}
  $\left(
    \begin{array}{cccccccccccccccccccc}
      0 & 0 & 0 & 0 & 0 & 3 & 0 & 0 & 0 & 0 & 0 & 0 & 0 & 0 & 0 & 0 & 0 & 0 & 0 & 0 \\
    \end{array}
  \right)$
\end{center}
The error is $e=00000300000000000000$. Thus the transmitted codeword is $c=r-e=10202230013001002303.$
\end{exmp}

\bigskip
\centerline{\Large\bf Acknowledgments}
\hspace{0.5cm}The authors would like to thank the group of algebra and geometry of Moulay Ismaïl university, especially M. Ait Ben Haddou. Thanks also to M. E.
Charkani from university of Fes for his help in commutative algebra.
\bigskip

\bigskip\bigskip


\begin{thebibliography}{999}\addcontentsline{toc}{section}{\protect\numberline{}{Bibliography}}
\bibitem{1} K. Abdelmoumen, M. Najmeddine et H. Ben-Azza. Pontrjagin Duality and Codes over Finite Commutative Rings. World Academy of Science, Engineering and Technology 56 2011, pp. 999-1003.
\bibitem{2} W.A. Adkins and S.H. Weintraub. Algebra : an Approach via Module Theory. Springer, 1992.
\bibitem{3} N. Bourbaki. Algèbre chapitres 1 à 3. Springer, 1970.
\bibitem{4} R-E. Kilima, N. Sigmon, E. Stitzinger. Applications of abstract algebra with Maple. CRC Press
LLC; 1999.
\bibitem{5} S.Ling and C.Xing. Coding Theory : A First Course. Cambridge University Press, 2004.
\bibitem{6} H.-A. Loeliger and T. Mittelholzer. Convolutional Codes over Groups. IEEE Trans. Information Th., Vol. 42(6), Nov. 1996, pp. 1660-1686.
\bibitem{7} B.R. McDonald. Finite Rings with Identity. Marcel Dekker, 1974.
\bibitem{8} T. Mittelholzer. Linear Codes and Their Duals over Artinian Rings. Codes, Systems, and Graphical Models, Eds. B. Marcus and J. Rosenthal. The IMA Volumes in Mathematics and its Applications, pp. 361-379, Springer, 2001.
\bibitem{9} P. Tauvel. Mathématiques Générales pour l'Agrégation. Masson, Paris, 1992.
\bibitem{10} J.A. Wood. Duality for Modules over Finite Rings and Applications to Coding Theory. American J. of Math., Vol 121.3, June 1999, pp. 555-575.

\end{thebibliography}
\end{document}